\documentclass[letterpaper,10pt,journal,onecolumn]{IEEEtran}
\usepackage{graphicx}
\usepackage{gensymb}
\usepackage{amsmath, amssymb, amsthm}
\usepackage{pdfpages}
\usepackage{algorithm}
\usepackage[noend]{algpseudocode}
\usepackage{verbatim}
\usepackage{booktabs}
\usepackage{textcomp}
\newtheoremstyle{assumptionstyle}{0in}{0in}{\normalfont}{0.5em}{\itshape}{:}{.5em}{}
\newtheoremstyle{propertystyle}{0in}{0in}{\normalfont}{}{\bf}{:}{.5em}{}
\theoremstyle{plain}
\newtheorem{theorem}{Theorem}

\theoremstyle{propertystyle}
\newtheorem{assumption}{Assumption}
\theoremstyle{propertystyle}
\newtheorem{property}{Property}
\usepackage{setspace}
\usepackage{moresize}
\usepackage[linktocpage=true,hyperfootnotes=false]{hyperref}

\oddsidemargin 0.0in
\evensidemargin 1.0in
\textwidth 6.5in
\headheight 0.0in
\topmargin 0.5in
\textheight 8.5in 
\headsep 0.0in

\setlength{\textfloatsep}{3pt}
\setlength{\intextsep}{3pt}

\begin{document}
\setlength{\abovedisplayskip}{3pt}
\setlength{\belowdisplayskip}{3pt}

\title{Trajectory Generation for UAVs in Unknown Environments with Extreme Wind Disturbances}

\author{Kenan Cole,
\thanks{Kenan Cole is a graduate of the Mechanical and Aerospace Engineering Department,     The George Washington University, Washington, DC 20052, USA}
        ~Adam M. Wickenheiser%
\thanks{Adam M. Wickenheiser is an Associate Professor in the Mechanical Engineering Department, University of Delaware, Newark, DE 19716, USA}}
        
\maketitle

\begin{abstract}
The widespread use of unmanned aerial vehicles (UAVs) by the military, commercial companies, and academia continues to push research for autonomous vehicle navigation, particularly in varying environmental conditions and beyond-line-of-sight (BLOS) applications. This article addresses trajectory generation for UAVs operating in extreme environments where the wind disturbances may exceed the vehicle's closed-loop stability bounds. To do this, a controller is developed that has two modes of operation: (1) normal mode, and (2) drift mode. In the normal mode the vehicle's thrust and sensor limitations are not exceeded by environmental conditions, whereas in the drift mode they are. In the drift mode, a drift frame that moves with the prevailing wind is established in which the vehicle maintains control authority to generate and track trajectories. The vehicle maintains control authority by relaxing the inertial frame trajectory tracking requirement and re-planning the trajectory in the drift frame. Guarantees are established to ensure tracking of the trajectory, collision avoidance, and respecting the vehicle thrust and sensor limitations. Simulation results demonstrate the algorithm properties through two scenarios. First, the performance of two quadrotors is compared where one utilizes the drift mode and the other does not. Second, multiple vehicles navigate through two narrow openings between protected and windy environments to demonstrate on-board updates to navigation parameters based on environmental conditions.
\end{abstract}

\section{Introduction}
Unmanned aerial vehicle use continues to grow with an increasing push for autonomous navigation and BLOS (also called beyond visual line of sight - BVLOS) operations. Current FAA guidelines allow certified remote pilots to fly drones according to the FAA's Part 107 rules and require waivers for any flight outside the rules \cite{FAACertifiedPilots}. As of June 2019 there were 38 waivers for Part 107.31 which is for flying unmanned aircraft systems (UAS) beyond the visual sight of the pilot \cite{FAA107Waivers}. Examples of waivers requested include agricultural and land management monitoring (PrecisionHawk, GreenSight Agronomics), infrastructure inspection (Xcel energy, BNSF Railway), transportation and insurance (Kansas Department of Transportation Division of Aviation, State Farm), and UAV development (CyPhy, Airobotics, Project Wing).

In all of these examples, BLOS operation includes several challenges, particularly the environment. It cannot be assumed that the vehicle has a priori knowledge of the environment and/or that the environment is static. Furthermore, there may be obstacles and/or other vehicles in the environment requiring the vehicle to sense and avoid safety threats using only on-board sensors. 

These uncertainties preclude global optimization algorithms \cite{VanLoock2014}, which require a priori knowledge of the environment, and local planners \cite{Shiller2013} are unattractive given their lack of convergence guarantees. Reactive trajectory generators, \cite{Matveev2015,Ferrera2017}, provide a strong basis for trajectory generation in unknown environments by generating trajectories directly as the environment is sensed. Their drawback for this application is they do not utilize the wind disturbance as an input to the trajectory generation. 

In order to guarantee stability of the vehicle, the trajectory must be generated while taking into account the control authority required to overcome any anticipated disturbances. If it does not, then under extreme wind conditions the trajectory could be too aggressive for the vehicle and violate the vehicle controller stability requirements. Hardware constraints such as sensor range, maximum velocity, clearance radius, and/or turning radius have been considered \cite{Matveev2015}-\nocite{Ferrera2017}\nocite{Hoy2012}\nocite{Choi2013}\cite{Chunyu2010}, but in each case the trajectory generation is decoupled from the disturbance rejection.

In our previous work \cite{Cole2018} we addressed this shortcoming by presenting a reactive trajectory generation algorithm for vehicles with second-order dynamics that considered the vehicle hardware constraints but also included the disturbance as an input. This work utilized an a priori worst case wind estimate to set trajectory generator parameters, but in a BLOS environment a predicted maximum wind cannot be guaranteed. Additionally, setting the parameters a priori results in conservative behavior when the conditions are not worst-case and results in grounding the vehicle when the wind exceeds the closed-loop stability.  

There are relatively few studies that consider extreme operation of UAVs. In one example, Dicker \cite{Dicker2016} develops a controller for a vehicle to recover from a collision by considering altitude control and horizontal control separately. Faessler et al.~\cite{Faessler2015} develop a controller for a vehicle that is being thrown and then must stabilize. Altitude is again the primary focus of the controller. In both of these cases, the extreme condition of a collision or being thrown is not persistent, and the vehicle has the control authority to re-stabilize itself. This differs from the conditions of this article where the vehicles may need to sustain safe operation in persistent, extreme environments by stabilizing about a new reference state.

To address the need for sustained operation in extreme environments that may violate the closed-loop stability bounds, we build upon our previous work \cite{Cole2018} by extending the environments in which the trajectory generation algorithm is valid. To do this, the algorithm operates in two modes, where the first mode is for normal operation when the vehicle's thrust and sensor limitations are not exceeded. The second mode is a drift mode that extends the vehicles' stability into higher wind cases. In drift mode a drift frame is established, where within the drift frame the vehicle has control authority to generate and track trajectories. Stability in the drift frame is possible by relaxing the inertial frame trajectory and instead re-planning the trajectory in the drift frame. The vehicle uses an estimate of the wind to establish a viable drift frame. 

The rest of the article is organized as follows. First, Sec.~\ref{SecMotivate} provides a motivating example for the work. Next, Sec.~\ref{SecProbDefCh6} defines the problem and establishes assumptions for the vehicles and operating environment. The trajectory generation algorithm from our previous work is briefly summarized in Sec.~\ref{SecTrajGen}, describing how the vehicle makes smooth heading and velocity changes. Section \ref{SecDriftFrame} develops the parameter dependencies on the environmental wind conditions as well as the solution for the drift frame velocity. Next, Sec.~\ref{SecVehController} defines the vehicle dynamics and flight controller. Simulation scenarios demonstrate the algorithm properties in Sec.~\ref{SecSimScenarioFormation}. Finally, Sec.~\ref{SecConclusionFormation} provides concluding remarks.

\section{Motivating Example}
\label{SecMotivate}
Consider an example where two identical vehicles experience identical gusts that are strong enough to exceed their closed-loop stability bounds. One vehicle does not enable the drift mode, tries to continue tracking the inertial trajectory, and becomes unstable, eventually crashing. The other vehicle enables the drift mode, tracks a new trajectory in the drift frame (thus relaxing the trajectory in the inertial frame), and maintains control. These different behaviors are shown in Fig.~\ref{FigMotivatingExample}.

\begin{figure*}
	\begin{center}
		\includegraphics[width=0.9\textwidth]{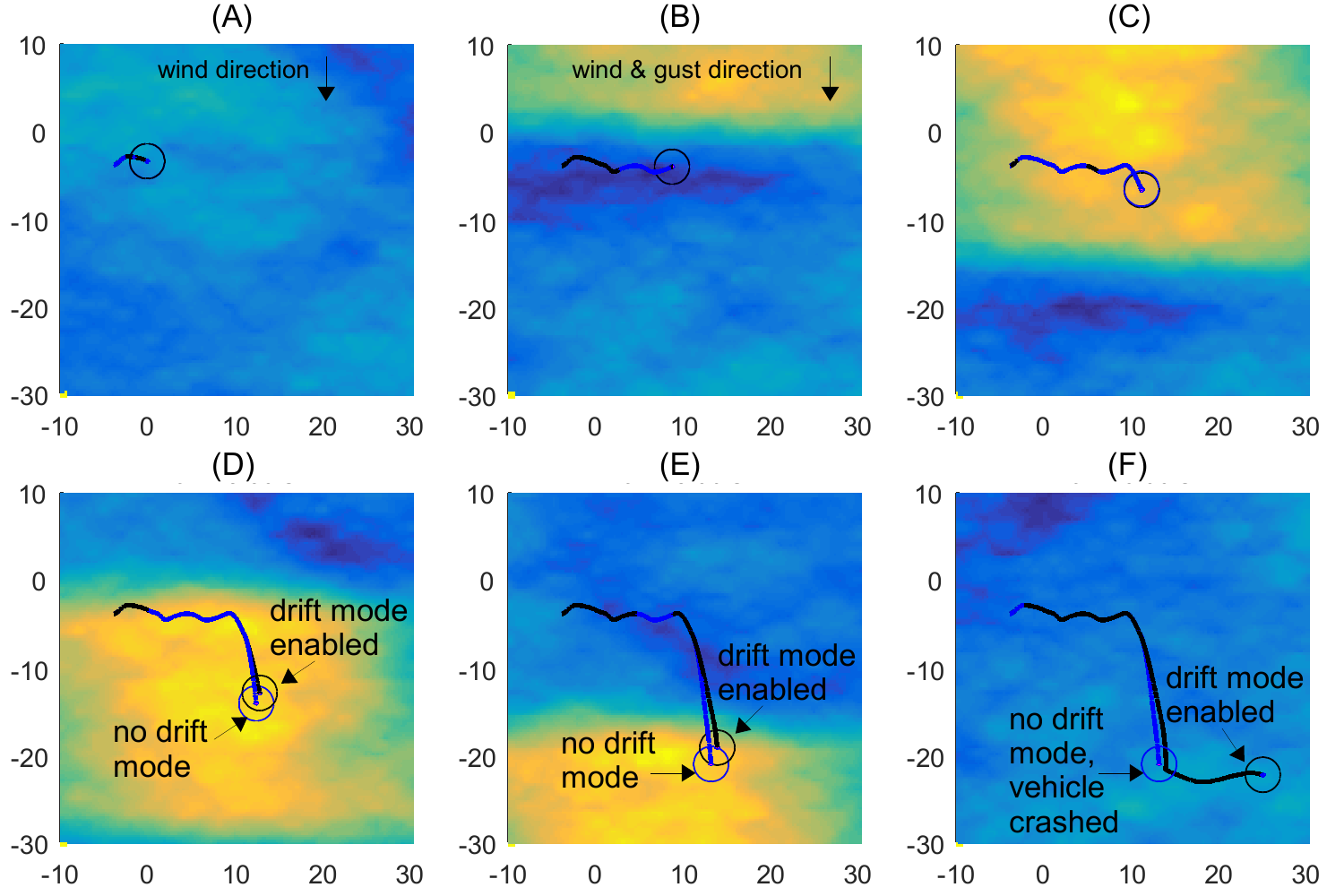}
		\caption{\label{FigMotivatingExample} Example showing two vehicles experiencing a large gust. The vehicle trajectories are overlaid to demonstrate the difference in performance. (A) Both vehicles start traversing the environment tracking inertial trajectories. (B) The vehicles continue to track inertial trajectories as a gust approaches (yellow region). (C) The gust is stronger than the vehicles can combat, so both vehicles are pushed by the gust. (D) The trajectories of the vehicles start to diverge, where the vehicle enabling drift mode has more control than the one that does not. (E) The vehicle that does not enable drift mode has tried to combat the gust, lost altitude, and crashed. The vehicle utilizing drift mode is still stable and tracking a trajectory in the drift frame. (F) Now that the gust has passed, the vehicle that enabled drift mode starts to correct back to the desired inertial frame trajectory.}
	\end{center}
\end{figure*}

This example illustrates the utility of the drift mode to enable vehicles to maintain stability and expand the conditions in which they can fly despite sustained or temporary gusting events. In the event of a temporary gust the vehicle disables drift mode and goes back to its normal operation when the gust is over. Our previous work \cite{Cole2018} only addressed inertial frame trajectories and would result in instability for a gust that exceeded the set maximum.

This example also highlights the need for the controller to have knowledge of the wind conditions. If the controller is ignorant of the current wind, then even in light or moderate winds the controller may demand greater thrust than the vehicle can provide, thus resulting in similar behavior to the vehicle that does not enable drift mode.   

\section{Problem Definition}
\label{SecProbDefCh6}
A trajectory generation algorithm is defined that establishes two operational modes for vehicles that allows the vehicles to extend their operation into environments that may their violate closed-loop stability bounds. The algorithm satisfies Properties \ref{PropTrajGen1} and \ref{PropTrajGen2} from our previous work \cite{Cole2018} in the inertial frame and Property \ref{PropDriftCh6} for the drift frame established in this article. These properties are rigorously achieved under the assumptions that follow.

\subsection{Algorithm Properties}
\label{SubSecAlgPropsCh6}
{\color{black}\begin{property} \label{PropTrajGen1} Generation of a piecewise-smooth (with isolated bounded discontinuities) desired trajectory $\mathbf{p}_d \in \mathbb{R}^3$ where the derivatives $\mathbf{p}_d^{(i)} \in \mathbb{R}^3,~\forall i=0,1,\ldots,n$ exist, are bounded, and respect the vehicle's maximum thrust, $f_{max}$, for a translational wind velocity of unknown direction and bounded magnitude, $||\mathbf{v}_{air}|| \leq v_{air,max}$.
\end{property}
\begin{property} \label{PropTrajGen2} Clearance of all obstacles and other vehicles by a user-defined clearance radius, $r_c$, which takes into account the vehicle's size as well as measurement, estimation, and tracking errors. 
\end{property}}
\begin{property} \label{PropDriftCh6} Generation of trajectories that satisfy Properties \ref{PropTrajGen1} and \ref{PropTrajGen2} in a drift frame, $[\mathbf{x}_D,\mathbf{y}_D,\mathbf{z}_D] \in \mathbb{R}^{3\times3}$, that moves at constant drift velocity, $\mathbf{v}_{drift}$, for conditions in the inertial frame where the drag force from the maximum translational wind velocity, $K_d v_{air,max}^2$, exceeds the vehicle's maximum remaining in-plane thrust, $f_{planar}$, where $K_d = 1/2\rho C_d A_{x_w}$, $\rho$ is the air density, $C_d$ is the drag coefficient, and $A_{x_w}$ is the cross sectional area normal to the resultant wind vector. 
\end{property}

\subsection{Assumptions}
{\color{black}\label{SecAlgAssumpFormation}
	\begin{assumption}
	\label{AssumptionPlanar}
	Vehicle desired trajectories and obstacle motions are planar, but vehicle dynamics are not restricted to be planar.
	\end{assumption}
	\begin{assumption}
	\label{AssumpVehicles}
	 Vehicles are finite in number and heterogeneous in physical parameters (mass, max thrust, etc.) and importance (e.g. higher valued asset).  
	\end{assumption} 
	\begin{assumption}
	 \label{AssumpVehCommsInfo} Vehicles share current position and course information when in range via wireless communication.
	\end{assumption} 
	\begin{assumption}
	 \label{AssumpSensorComms} Vehicles sensor and communication sample periods and ranges are equal and given by $\Delta T_s$ and $r_s > r_c$, respectively. Within these limitations, the sensor and inter-vehicle communications provide perfect distance and velocity information. 
	\end{assumption}
	\begin{assumption}
	  The clearance radius $r_c$ ensures there are no aerodynamic interactions between one vehicle and another or with obstacles. 
	\end{assumption}
	\begin{assumption}
	 Wind disturbances are bounded, time-varying, and planar. Updraft effects near obstacles are assumed to be limited to a distance less than $r_c$.
	\end{assumption}
	\begin{assumption}
	\label{AssumpLessCapable}
	There are a finite number of obstacles and each obstacle is finite size, moves with constant velocity (less than minimum vehicle cruise velocity) and constant course. Minimum obstacle separation does not prevent the vehicles from moving between them.
	\end{assumption}
	\begin{assumption}
	\label{AssumpGoalPos}
	 Goal positions are not too close to obstacles or each other to violate vehicle clearance radii and are not infinitely far from the coordinate origin.
	\end{assumption}}
	\begin{assumption}
\label{AssumpMovingObsFormation}
The maximum wind in which moving obstacles can operate, $v_{air,max,obs}$, is always less than or equal to the maximum wind in which the vehicles can operate: $v_{air,max,obs} \leq v_{air,max}$.
	\end{assumption}
	
\section{Trajectory Generation}
\label{SecTrajGen}
{\color{black}The trajectory generation algorithm takes each vehicle from its current position and velocity and guides it on a collision-free trajectory to the goal position.} This section summarizes the relevant sections from our previous work \cite{Cole2018} and indicates any modifications for application to the drift frame development. The following steps of the algorithm are briefly summarized: (1) vehicle maneuverability ranking to determine which vehicles are responsible for maneuvering (Sec.~\ref{SubSecVehManeuver}), (2) vehicle sensors and how the vehicle compiles sensor input (Sec.~\ref{SubSecObsVehIdentify}), (3) procedure for determining a course change around an obstacle (Sec.~\ref{SecCourseChangeOneObs}), and (4) governing algorithms to make smooth course and velocity transitions (Sec.~\ref{SecSigmoid}). 

\subsection{Ranking Vehicles' Maneuverability}
\label{SubSecVehManeuver}
{\color{black}The vehicle utilizes on-board sensing and communication to identify obstacles and other vehicles within sensor range. When two vehicles meet, to determine which one is responsible for maneuvering, the vehicles exchange their maximum cruise velocity, $v_c$, current velocity, $\mathbf{\dot{p}}_d$, clearance radius, $r_c$, {\color{black}maximum wind speed in which they can operate, $v_{w,op}$}, and a pre-assigned $ID$ value when they come within communication range of each other.} 

{\color{black}The maximum wind speed is an additional parameter to exchange that was not included in our previous work. The development of the drift frame requires this parameter to be exchanged; otherwise a vehicle may be expected to maneuver when it does not have sufficient control authority, possibly resulting in a collision.} The cruise velocity is modified from \cite{Cole2018} for the ranking determination to account for vehicles that have no control authority in the environment, resulting in
\begin{equation}
	v_c^* = \left\{\begin{array}{ll}
				0, & v_{air} > v_{w,op} \\
				0, & ||\mathbf{\dot{p}}_d(t)|| = 0 \\
				v_c, & \mathrm{otherwise}
				\end{array}
				\right.
\end{equation} 
{\color{black}To satisfy Assumption \ref{AssumpLessCapable}, vehicles with larger $v_c^*$ maneuver around vehicles with smaller $v_c^*$. Likewise, vehicles with higher $ID$ values maneuver around vehicles with lower $ID$ values for equal $v_c^*$, forming the set $\mathcal{I}_{mnvr} \subseteq \mathcal{I}_{nr}$, where $\mathcal{I}_{nr}$ is the set of all vehicles within $r_s$ of the vehicle's current position.}

\subsection{Compiling Sensor Inputs}
\label{SubSecObsVehIdentify}
{\color{black}The vehicle uses on-board distance and angle measurements to obstacles and other vehicles to determine the most imminent collisions, if any. We assume that the sensing is isotropic (i.e. has the same range and rate in all directions) and that the vehicle differentiates obstacles by finding discontinuities in range and angle. The vehicle gives each distinct obstacle a unique local identifier, $id$, and compiles all obstacles in the set $\mathcal{I}_{obs} = \left\{id_1,...,id_m \right\}$, where $m$ is the number of distinct obstacles within range. {\color{black}The inertial positions of the sensed points are given by $\mathbf{p}_{id,i}$, where $i = 1,...,n_{id}$, and $n_{id}$ is the number of sensed points for that particular obstacle.}

The data for the vehicles in $\mathcal{I}_{mnvr}$ is combined with the data for the obstacles in $\mathcal{I}_{obs}$ to form a data array of distinct vehicles and obstacles in the environment.}

\subsection{Course Change Definition for an Obstacle}
\label{SecCourseChangeOneObs}
{\color{black}To safely navigate the environment and avoid collisions, the vehicle can change course, $\Delta \phi$, and/or velocity, $\Delta v$.} {\color{black} This section summarizes course changes; velocity changes are discussed in \cite{Cole2018}. 

The vehicle's overall objective is to reach the goal position; therefore, heading changes are only made when necessary and to minimize the time to reach the goal position. The course change definition provides a desired heading change, $\Delta \phi$, corresponding circumnavigation direction (i.e. clockwise or counterclockwise traversing of the obstacle), $\mathbf{z}_{k}$, and feasible set, $\mathbf{O}_k$, of all safe course changes to clear the obstacle by $r_c$. Figure \ref{FigTrajGenSummary} summarizes the algorithm process through an example in the inertial frame where the drift mode is not required. Section \ref{SecDriftFrame} expands the course change for application to the drift frame. }

\begin{figure*}
	\begin{center}
		\includegraphics[width=0.9\textwidth]{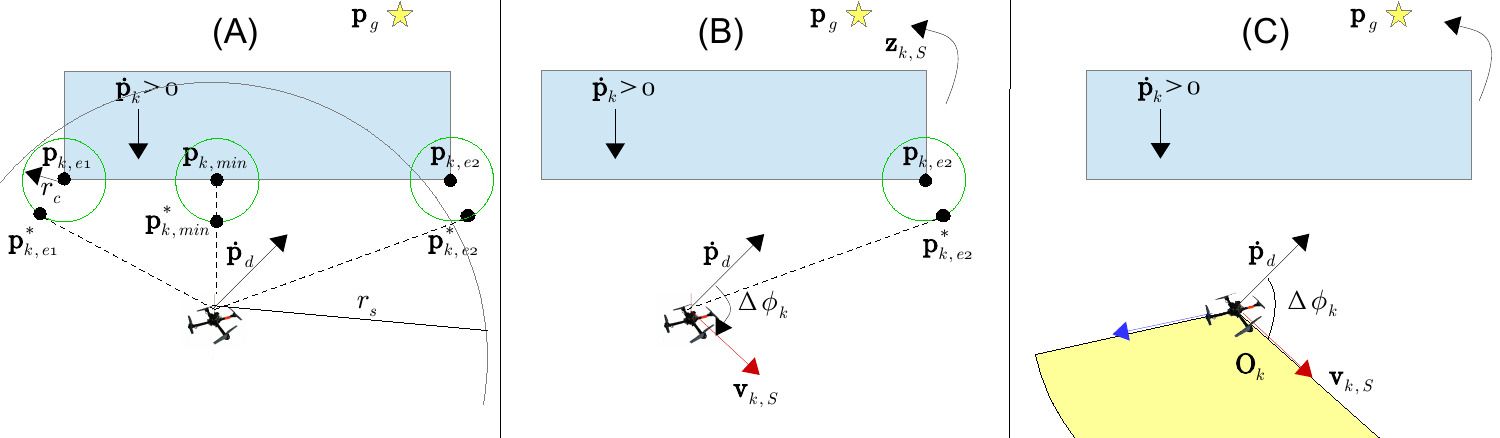}
		\caption{\label{FigTrajGenSummary} {\color{black}Summary of course change definition in response to an obstacle moving toward the vehicle. (A) Determination of the constraining geometry from the sensor input. The bounding extent points, $\mathbf{p}_{k,e1}$, $\mathbf{p}_{k,e2}$, corresponding projected extent points $\mathbf{p}^*_{k,e1}$, $\mathbf{p}^*_{k,e2}$, minimum sensed point, $\mathbf{p}_{k,min}$, and projected minimum point, $\mathbf{p}^*_{k,min}$, are all used to determine an appropriate course change and circumnavigation direction. (B) Choose extent point $\mathbf{p}_{k,e2}$ to traverse the obstacle towards based on sensor input, goal position location, and estimated traverse time. The heading change, $\Delta \phi_k$, is established from the course change vector, $\mathbf{v}_{k,S}$, as is a counter-clockwise circumnavigation direction, $\mathbf{z}_{k,S}$. (C) Determination of the feasible set of all course change angles, $\mathbf{O}_k$, (highlighted in yellow) the vehicle could make to clear the obstacle.} }
	\end{center}
\end{figure*}

\subsection{Smooth course and velocity transitions}
\label{SecSigmoid}
{\color{black}The trajectory generation algorithm utilizes sigmoid functions to transition from the previous course, $\phi_{n-1}$, and velocity, $v_{n-1}$, to a new course, $\phi_{n}$, and velocity, $v_{n}$. The sigmoid functions are
\begin{align}
			\label{EqSigPhi} \phi &= c_1 \tanh(c_2 \tau - c_3) + c_4 \\
			\label{EqSigV} v &= d_1 \tanh(d_2 \tau - d_3) + d_4
\end{align}
\noindent where the coefficients are chosen to match thrust, sensor, and $r_c$ constraints. 

As new information is provided, additional sigmoids are added to the current sigmoid to still respect the thrust constraints of the vehicle. This is done by an offset time to start the new sigmoid and solving for the new sigmoid curve timespan. Both of these parameters take into account the maximum thrust, sensor update rate and range constraints, and clearance radius. The transitions between the sigmoid curves are smooth and guarantee the vehicle can track the desired trajectory. }

\section{Drift Frame Development}
\label{SecDriftFrame}
The drift frame utilizes the trajectory generation algorithm to extend the environments in which the vehicle can operate and provide quantitative bounds for that environment. To do this, there are three extensions to the trajectory generation algorithm that are developed in this section. First, the vehicle must make a wind estimation, $\tilde{\mathbf{v}}_{air}$, to update the parameters for the drift frame, next the vehicle updates the parameters which include $r_c$ and $v_c$ (Sec.~\ref{SubSecParamDependOnVair}), and finally the vehicle determines the drift frame velocity, $\mathbf{v}_{drift}$, (Sec.~\ref{SubSecDriftFrameVel}). 

\subsection{Wind Estimation}
\label{SubSecWindEst}
To determine the drift frame velocity, the vehicle must make a wind estimation. The wind estimation can be achieved through several different approaches such as inverse dynamics \cite{Waslander2009,Sikkel2016}, linear fit \cite{Palomaki2017}, or use of additional sensors \cite{Sydney2013,Bruschi2016,Kumar2016}. 

It is assumed in this work that any additional capacity of the vehicle is taken up by a mission payload leaving no room for additional sensors. For the purposes of this article we use inverse dynamics with a nonlinear solver to estimate the wind, but this is not the only suitable wind estimation method. 

\subsection{Parameter Dependency on Wind Estimation}
\label{SubSecParamDependOnVair}
Our previous work used a maximum gust value to calculate all parameters associated with the trajectory, which resulted in conservative behavior. As we allow for larger gusting events to be handled, we also allow the clearance radius, $r_c$, of the vehicle (i.e. the distance by which it clears other vehicles and obstacles) and the cruise velocity (i.e. maximum safe velocity to navigate the environment) to change as functions of the environmental conditions. This allows the vehicle to maneuver more or less aggressively as conditions change.

First we consider the update to the clearance radius. There is always a minimum clearance radius associated with the vehicle size, $r_{c,v}$, but the additional clearance to take into account controller tracking errors and environmental effects, $r_{c,e}$, is updated with the changing environment. Allowing the clearance radius, $r_c = r_{c,v} + r_{c,e}$, to change reduces the required maneuvering in lighter wind conditions, allowing the vehicle to take a more optimal course. 

Once the clearance radius is updated, the cruise velocity is re-solved according to Theorem 2 in \cite{Cole2018}. The clearance radius and cruise velocity are inversely proportional so a larger clearance radius reduces the cruise velocity. Both parameters are then used to update the minimum reaction distance, which is the distance at which the vehicle must start maneuvering to avoid a collision. Likewise, the parameters are used for computing course changes (Sec.~\ref{SecCourseChangeOneObs}) and smooth transitions (Sec.~\ref{SecSigmoid}).

As a result of changing $r_c$ values, there may be temporary violations of $r_c$ right after a new clearance radius is computed, as shown in Fig.~\ref{FigSuddenRCIncrease}. This temporary violation does not mean that the vehicle experiences a collision since the vehicle size is accounted for with $r_{c,v} < r_c$. Instead, the generated trajectory guides the vehicle out of the violation to continue to ensure safe navigation.

To do this, the computations that are performed to the sensor points to account for $r_c$, as shown in Fig.~\ref{FigTrajGenSummary}A, must be modified when there is a temporary $r_c$ violation. The updated points are defined according to Fig.~\ref{FigInRc} as follows:

\begin{equation}
	\label{EqPhiEiFormation}
	\phi_{e,i} = \left\{\begin{array}{ll}
					\phi_{e1,i} + \phi_{e2,i}, & ||\mathbf{r}_{k,i}|| > r_c \\
					\mathrm{angle}(\mathbf{r}_{k,min},\mathbf{p}_{k,i}^*), &||\mathbf{r}_{k,i}|| \leq r_c 
					\end{array}
					\right.
\end{equation}
\noindent where
\begin{align}	
	&\phi_{e1,i} = \mathrm{angle}(\mathbf{r}_{k,min},\mathbf{r}_{k,i}) \\
	\label{EqPhiE2PadObsFormation}
	&\phi_{e2,i} = k_{\phi,e1,i}\left(\phi_{e1,i}\right)\sin^{-1}\left(\frac{r_c}{||\mathbf{r}_{k,i}||}\right) \\
	&\mathbf{r}_{k,i} = \mathbf{p}_{k,i}-\mathbf{p}_d \\
	&\phi_{\pm1} =  \frac{\pi}{180}\mathrm{sgn}(\left(\mathrm{angle}(\mathbf{\dot{p}}_d,\mathbf{r}_{k,min})\right)) \\
	&k_{\phi,e1,i} = \left\{\begin{array}{ll}
	\mathrm{sgn}(\phi_{e1,i}), & |\phi_{e1,i}| > 0 \\
	\mathrm{sgn}(\mathrm{angle}(\mathbf{r}_{k,min},\mathbf{\dot{p}}_d)), & \phi_{e1,i} = 0
	\end{array}
	\right. \\
	\label{EqPkiStarFormation}
	&\mathbf{p}_{k,i}^* = \left\{\begin{array}{ll}
				\mathbf{p}_d + \sqrt{||\mathbf{r}_{k,i}||^2 - r_c^2}\mathbf{R}_{\phi_{e,i}}\hat{\mathbf{r}}_{k,min}, & ||\mathbf{r}_{k,i}|| > r_c \\
				\mathbf{p}_{k,i} + r_c \mathbf{R}_{\phi_{\pm1}}\frac{-\mathbf{r}_{k,i}}{||\mathbf{r}_{k,i}||}, & ||\mathbf{r}_{k,i}|| \leq r_c 
				\end{array}
				\right.
\end{align}

\begin{figure}
	\begin{center}
		\includegraphics[width=3in]{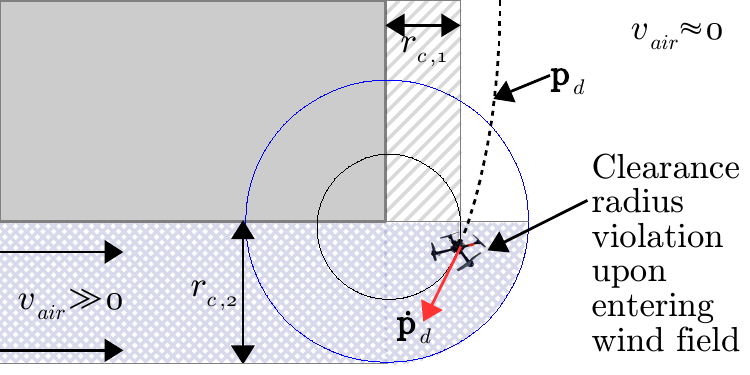}
		\caption[As the vehicle leaves the protected area and enters the wind field,]{\label{FigSuddenRCIncrease} The vehicle leaves a protected area and enters a wind field, which requires updating the clearance radius, $r_{c,1} \rightarrow r_{c,2}$. Since the vehicle tracks a trajectory around the obstacle for, $r_{c,1}$, there is a temporary clearance radius violation.}
	\end{center}
\end{figure}

\begin{figure}
	\begin{center}
		\includegraphics[width=3in]{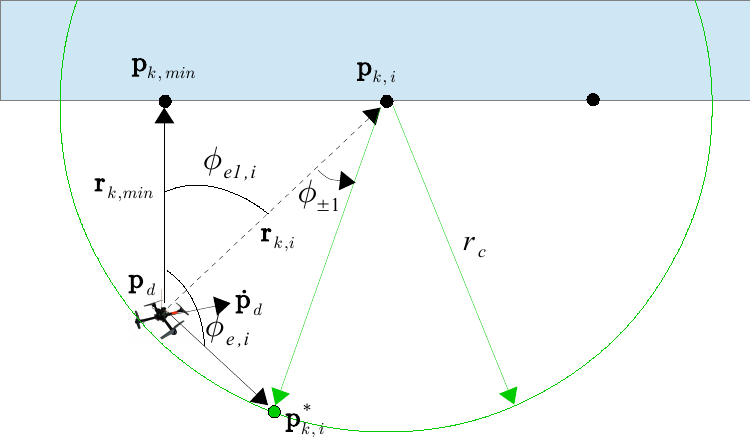}
		\caption{\label{FigInRc} Determination of candidate course change points for a vehicle that is temporarily violating $r_c$.}
	\end{center}
\end{figure}

Similarly, the tangent direction (i.e. a vector to traverse the obstacle either along the ``face" of the obstacle or along a more conservative path that opens space between the vehicle and obstacle) is also affected by the temporary $r_c$ violation. The previous tangent direction definitions from \cite{Cole2018} did not include temporary $r_c$ violations because $r_c$ was set for the maximum wind condition. If the original definitions are used, the vehicle will continue to violate $r_c$. Instead, the tangent direction definition is modified to safely navigate the vehicle away from the obstacle and out of the violation. This update is shown in Fig.~\ref{FigTangentDirectionsB} and defined in Eqs.~\ref{EqPljS1Veh} to \ref{EqPljS4Veh} which are repeated from \cite{Cole2018} for clarity (note these equations are the same as Eqs.~44 to 47 in \cite{Cole2018} which govern behavior for navigating around other vehicles):   

\begin{figure}
	\begin{center}
		\includegraphics[width=3in]{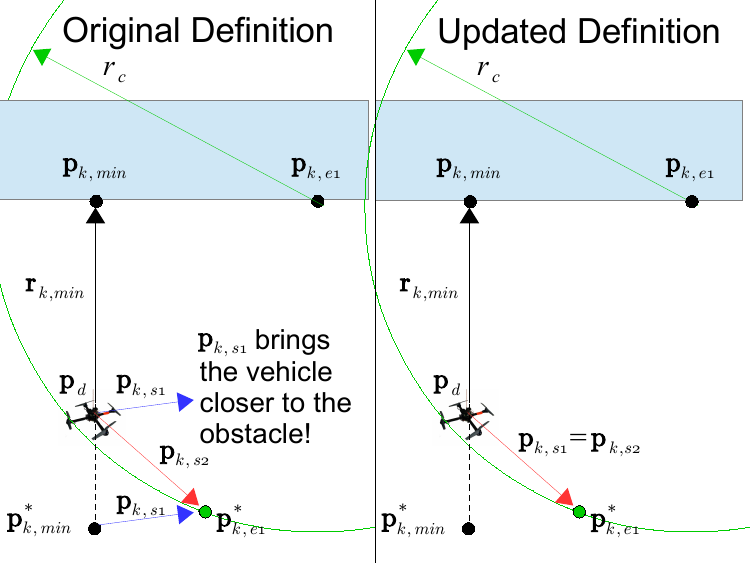}
		\caption{\label{FigTangentDirectionsB} Tangent direction definitions when there are temporary $r_c$ violations. In the original definition, the $p_{k,s1}$ tangent direction brings the vehicle closer to the obstacle, which is undesired. Instead the updated definition navigates the vehicle to clear the $r_c$ violation. }
	\end{center}
\end{figure}
{\color{black}
{\small
\begin{align}
	\label{EqPljS1Veh}
	\mathbf{p}_{k,s1} &= \left\{\begin{array}{ll}
						\mathbf{p}_{k,e1}^* - \mathbf{p}_{k,min}^*, & ||\mathbf{p}_{k}-\mathbf{p}_d|| \geq r_c \\
						\mathbf{p}_{k,e1}^* - \mathbf{p}_d, & ||\mathbf{p}_{k}-\mathbf{p}_d|| < r_c
						\end{array}
						\right. \\ 
	\mathbf{p}_{k,s2} &= \mathbf{p}_{k,e1}^* - \mathbf{p}_d \\
	\mathbf{p}_{k,s3} &= \left\{\begin{array}{ll}
						\mathbf{p}_{k,e2}^* - \mathbf{p}_{k,min}^*, &  ||\mathbf{p}_{k}-\mathbf{p}_d|| \geq r_c\\
						\mathbf{p}_{k,e2}^* - \mathbf{p}_d, & ||\mathbf{p}_{k}-\mathbf{p}_d|| < r_c
						\end{array}
						\right. \\
	\label{EqPljS4Veh}
	\mathbf{p}_{k,s4} &= \mathbf{p}_{k,e2}^* - \mathbf{p}_d
\end{align}
}}

As these parameters are updated, they are used immediately by the trajectory generation algorithm. If the wind estimation and re-calculation of $r_c$ and $v_c$ have very fast update rates, the behavior of the vehicle could become erratic. To combat this, a delay could be implemented to keep $r_c$ from decreasing too soon after an increase.

\subsection{Drift Frame Velocity}
\label{SubSecDriftFrameVel}
The two modes of operation are the normal and drift modes. The difference between the two modes is whether the trajectory is generated in the inertial (normal mode) or drift (drift mode) frame. To simplify the algorithm development, all definitions from Sec.~\ref{SecTrajGen} are for the drift frame, and the normal mode operation is the special case where $||\mathbf{v}_{drift}|| = 0$. The trajectory generation in the drift frame guarantees that the vehicle achieves Property \ref{PropDriftCh6}, which comes by relaxing the tracking performance in the inertial frame.

To solve for the drift frame velocity, we refer to Theorem \ref{ThDriftVairMax} below. This theorem establishes the drift frame velocity, $\mathbf{v}_{drift}$, and the parameters to solve for the vehicle cruise velocity in the drift frame, $v_c^D$ (see Theorem 2 of \cite{Cole2018}). As long as a valid $v_c^D$ can be found, then the vehicle maintains some control authority. If no valid $v_c^D$ is found under the conditions of Theorem \ref{ThDriftVairMax} and none of the hardware/environment parameters (ex. sensor range, obstacle spacing, etc) can be modified to produce a valid cruise velocity, then the vehicle should be grounded. Otherwise, the vehicle would be carried by the wind with no ability to maneuver. 

It should be noted that Theorem \ref{ThDriftVairMax} provides inputs to Theorem 2 of \cite{Cole2018} for the maximum obstacle speed, $v_{o,max}^D$, and maximum air speed, $v_{air,max}^D$. Theorem 2 of \cite{Cole2018} is then used to solve for $v_c^D$ while respecting the constraints on vehicle thrust, sensor limitations, and the environment.

Figure \ref{FigTrajGenSummaryWithDrift} illustrates Theorem \ref{ThDriftVairMax} comparing the drift and inertial views of the environment.

\begin{figure*}
	\begin{center}
		\includegraphics[width=0.9\textwidth]{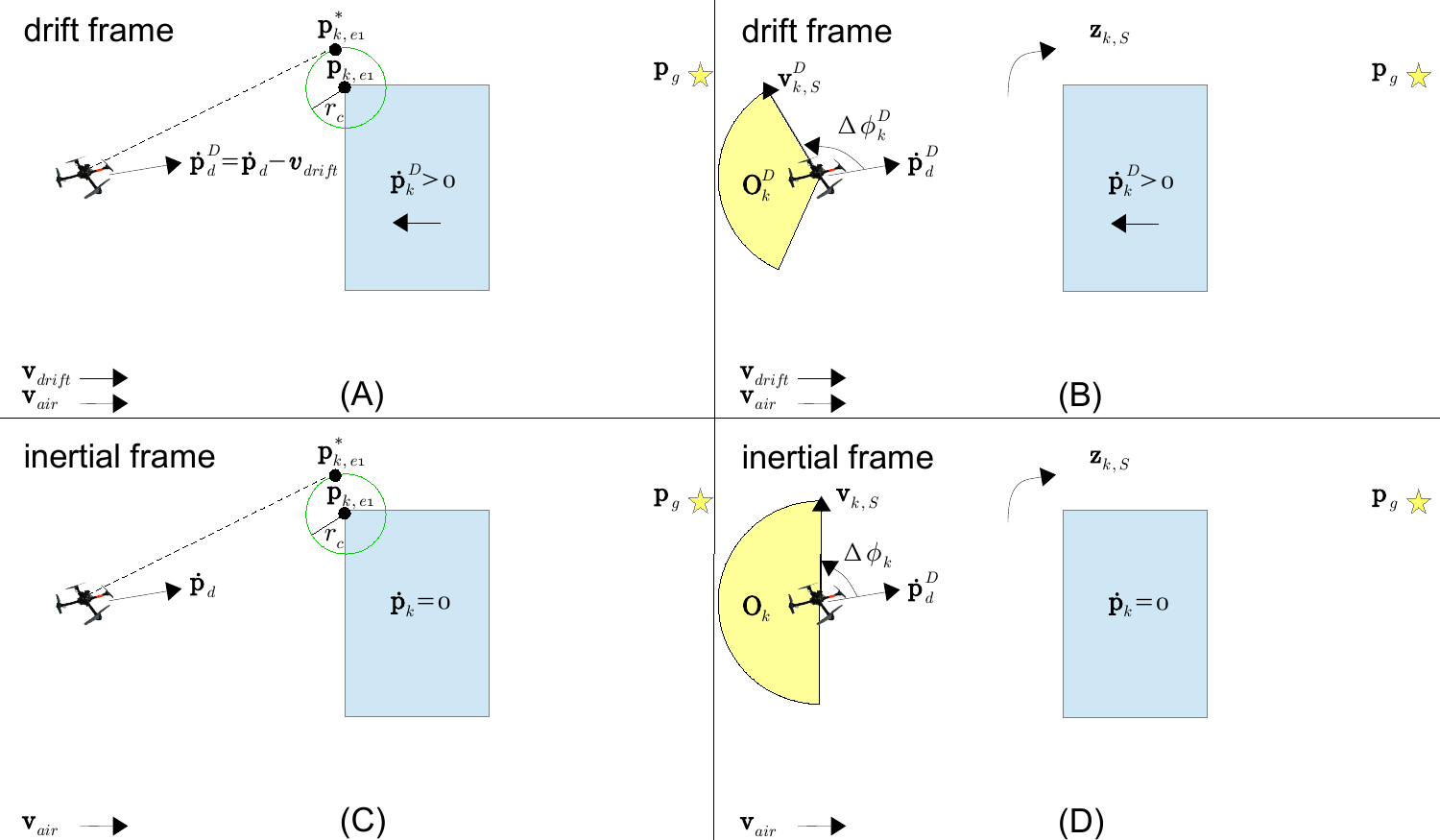}
		\caption{\label{FigTrajGenSummaryWithDrift}Summary of trajectory generation decisions comparing the drift frame to the inertial frame for navigating around a stationary obstacle. (A) In the drift frame the vehicle sees the obstacle moving towards the vehicle with non-zero velocity. The vehicle identifies the constraining projected extent point, $\mathbf{p}_{k,e1}^*$, for determining the heading change and circumnavigation direction. (B) The vehicle determines a heading change, $\Delta \phi_k$, to maneuver around the obstacle and feasible set of all possible course changes, $\mathbf{O}_k$. (C) In the inertial frame the vehicle sees the obstacle with zero velocity, but finds the same constraining projected extent points. (D) Using the projected extent point the vehicle determines the heading change and feasible set of angles.}
	\end{center}
\end{figure*}

\begin{theorem}
\label{ThDriftVairMax}
Let the drift velocity of the vehicle be constrained by
\begin{align}
	\label{EqVdrift}
	 v_{air,max} - \sqrt{\frac{f_{planar}}{K_d}} \leq &v_{drift}  \leq \sqrt{v_{air,max}^2 + \frac{f_{planar}}{K_d}}\\
	\label{EqVdriftVec}
	v_{air,max} - \sqrt{\frac{f_{planar}}{K_d}} \leq &\mathbf{v}_{drift} \cdot \frac{\mathbf{v}_{air}}{||\mathbf{v}_{air}||} \leq v_{air,max} 
\end{align}
\noindent where
\begin{align}
	\label{EqFplanarNormalDrag}
	\mathbf{f}_{planar}& = m\mathbf{\ddot{p}}_d + \mathbf{f}_w \\
	\label{EqFdrag}
	\mathbf{f}_{w} &= K_d ||\mathbf{v}_{w}||^2 (-\mathbf{x}_W) \\
	\label{EqKdDefinition}
	K_d &= \frac{1}{2} \rho C_{D} A_{x_W}
\end{align}
\noindent and $m$ is the vehicle mass, $\mathbf{v}_w = \mathbf{\dot{p}} - \mathbf{v}_{air}$ is the relative wind velocity between the vehicle and the air, $\mathbf{x}_W$ is the wind frame axis aligned with $\mathbf{v}_w$, $\rho$ is the air density, $C_{D}$ is the coefficient of drag, $A_{x_W}$ is the cross sectional area normal to $\mathbf{v}_w$.

Next, define a drift frame that moves with the constant velocity vector, $\mathbf{v}_{drift}$, relative to the inertial frame. Then in the drift frame the maximum relative air velocity is
\begin{equation}
	\label{EqVairMaxCorollary}
	\mathbf{v}_{air,max}^D = \mathbf{v}_{air} - \mathbf{v}_{drift}
\end{equation}
\noindent and the nominal vehicle velocity in the drift frame is $||\mathbf{\dot{p}}_d^D|| = 0$. Similarly, the maximum obstacle speed in the drift frame is 
\begin{equation}
	v_{o,max}^D \leq v_{drift}
\end{equation}
If $\mathbf{v}_{drift}$ is chosen to satisfy Eqs.~\ref{EqVdrift} and \ref{EqVdriftVec}, and the drift frame maximum air speed, $v_{air,max}^D$ and maximum obstacle speed $v_{o,max}^D$, are used to solve for $v_c^D$ according to Theorem 2 of \cite{Cole2018}, then in the drift frame the vehicle does not violate $f_{max}$, safely clears obstacles by $r_c$, and the inertial frame cruise velocity satisfies $v_c = v_c^D + ||\mathbf{v}_{drift}||$.
\end{theorem}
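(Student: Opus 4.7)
The plan is to prove the theorem by (i) fixing the nominal operating point in the drift frame, (ii) using the drift-velocity bounds to cap the apparent drag force there, (iii) transcribing the rest of the environment into the drift frame via a Galilean boost, and (iv) feeding the resulting drift-frame problem into Theorem~2 of \cite{Cole2018}.

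First I would nail down the nominal kinematics. Drift mode asks the vehicle to hover in the drift frame, so set $\mathbf{\dot{p}}_d^D = \mathbf{0}$; by the Galilean transformation this forces $\mathbf{\dot{p}}_d = \mathbf{v}_{drift}$, and substituting into Eq.~\ref{EqFdrag} yields $\mathbf{v}_w = \mathbf{v}_{drift} - \mathbf{v}_{air}$. The same boost subtracts $\mathbf{v}_{drift}$ from every inertial velocity, so $\mathbf{v}_{air}^D = \mathbf{v}_{air} - \mathbf{v}_{drift}$ (Eq.~\ref{EqVairMaxCorollary}) and $||\mathbf{\dot{p}}_d^D|| = 0$ both follow at once.

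Next I would verify that any $\mathbf{v}_{drift}$ satisfying Eqs.~\ref{EqVdrift}--\ref{EqVdriftVec} keeps $K_d\,||\mathbf{v}_w||^2 \le f_{planar}$ for every admissible $\mathbf{v}_{air}$. Decomposing $\mathbf{v}_{drift}$ into components along $\hat{\mathbf{v}}_{air}$ and perpendicular to it, and writing $u = \mathbf{v}_{drift}\cdot\hat{\mathbf{v}}_{air}$ with $u_\perp^2 = v_{drift}^2 - u^2$, the square of the relative wind is $||\mathbf{v}_w||^2 = (u - ||\mathbf{v}_{air}||)^2 + u_\perp^2$. The lower bound on $u$ in Eq.~\ref{EqVdriftVec} controls the parallel deficit, while the upper bound on $v_{drift}$ in Eq.~\ref{EqVdrift} caps the orthogonal leftover $u_\perp$; together they pin $||\mathbf{v}_w||^2$ below $f_{planar}/K_d$. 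I expect this sandwiching calculation to be the main technical obstacle, because both the scalar bound on $v_{drift}$ and the directional bound on its projection are needed simultaneously, and the worst case over $||\mathbf{v}_{air}|| \leq v_{air,max}$ must be checked explicitly.

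Finally I would assemble the drift-frame problem and invoke Theorem~2 of \cite{Cole2018}. Assumption~\ref{AssumpMovingObsFormation} guarantees that drift mode, which is only active when $K_d v_{air,max}^2 > f_{planar}$, occurs in winds that preclude moving obstacles from operating; every obstacle is therefore inertially stationary and moves at $-\mathbf{v}_{drift}$ in the drift frame, giving $v_{o,max}^D \leq v_{drift}$. Galilean invariance preserves $r_c$, accelerations, and $f_{max}$, so the drift-frame problem has exactly the inputs required by Theorem~2 of \cite{Cole2018}. Feeding $v_{air,max}^D$ and $v_{o,max}^D$ into that theorem returns a feasible drift-frame cruise $v_c^D$ respecting $f_{max}$ and $r_c$; converting back to the inertial frame by reversing the boost adds $||\mathbf{v}_{drift}||$ when the vehicle is aligned with the drift direction, giving $v_c = v_c^D + ||\mathbf{v}_{drift}||$ and closing the proof.
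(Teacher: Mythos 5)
Your overall architecture mirrors the paper's: decompose $\mathbf{v}_{drift}$ into downwind and crosswind components relative to $\mathbf{v}_{air}$, use Eqs.~\ref{EqVdrift} and \ref{EqVdriftVec} to bound the apparent wind in the drift frame, reduce the obstacle problem to a stationary obstacle moving at $-\mathbf{v}_{drift}$, and hand $v_{air,max}^D$ and $v_{o,max}^D$ to Theorem~2 of \cite{Cole2018}. The problem is your step (ii), which you correctly identify as the crux but leave as a plan --- and, as you have formulated it, it does not go through. Write $s=\sqrt{f_{planar}/K_d}$, $u=\mathbf{v}_{drift}\cdot\mathbf{v}_{air}/||\mathbf{v}_{air}||$, and $u_\perp^2=v_{drift}^2-u^2$; at $||\mathbf{v}_{air}||=v_{air,max}$ the drift-frame apparent wind satisfies $||\mathbf{v}_{air}-\mathbf{v}_{drift}||^2=(v_{air,max}-u)^2+u_\perp^2$. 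Choosing $u=v_{air,max}-s$ (the lower bound of Eq.~\ref{EqVdriftVec}) together with $v_{drift}=\sqrt{v_{air,max}^2+s^2}$ (the upper bound of Eq.~\ref{EqVdrift}) is feasible under both stated constraints, yet gives $u_\perp^2=2\,v_{air,max}\,s$ and hence $||\mathbf{v}_{air}-\mathbf{v}_{drift}||^2=s^2+2\,v_{air,max}\,s>s^2$: the apparent wind exceeds $\sqrt{f_{planar}/K_d}$ and control authority is lost. So the scalar bound and the directional bound do not jointly ``pin $||\mathbf{v}_w||^2$ below $f_{planar}/K_d$''; the parallel deficit and the orthogonal leftover can each saturate their own constraint simultaneously, and their squares add.

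Be aware that the paper's own proof sidesteps rather than resolves this: it evaluates only the two extreme configurations ($\mathbf{v}_{drift}$ purely downwind, and downwind component exactly equal to $v_{air,max}$), derives one bound of Eq.~\ref{EqVdrift} from each, and never asserts the uniform bound over the whole feasible set that your step (ii) promises to establish. To complete your version you would need either to replace Eqs.~\ref{EqVdrift}--\ref{EqVdriftVec} by the condition $(||\mathbf{v}_{air}||-u)^2+u_\perp^2\leq f_{planar}/K_d$ (the actual disk of admissible $\mathbf{v}_{drift}$, centered at $\mathbf{v}_{air}$) or to supply a coupling between $u$ and $u_\perp$ that the stated constraints do not provide. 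The remaining pieces of your plan --- the Galilean bookkeeping giving $\mathbf{v}_{air}^D=\mathbf{v}_{air}-\mathbf{v}_{drift}$, $||\mathbf{\dot{p}}_d^D||=0$, and $v_{o,max}^D\leq v_{drift}$, and the final appeal to Theorem~2 of \cite{Cole2018} for $v_c^D$ --- coincide with the paper's argument.
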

\begin{proof}
See Appendix A.
\end{proof} 

\subsection{Trajectory Guarantees}
Properties \ref{PropTrajGen1} and \ref{PropTrajGen2} are rigorously proven in \cite{Cole2018}, guaranteeing that the vehicle can navigate safely in the environment while respecting the vehicle sensor and thrust limitations and clearing all vehicles and obstacles by $r_c$. Additionally Theorem 3 in \cite{Cole2018} guarantees the vehicle reaches the goal position in finite time.

Property \ref{PropDriftCh6} guarantees that Properties \ref{PropTrajGen1} and \ref{PropTrajGen2} are maintained in both the inertial and drift frames under the assumptions from Sec.~\ref{SecAlgAssumpFormation}. The properties are upheld since the vehicle preserves control authority within the drift frame. Appendix A of \cite{ColeARXIV2019} proves that the vehicle maintains control authority provided the drift frame velocity is set using Theorem \ref{ThDriftVairMax}. 

These properties prioritize vehicle safety within the drift frame to ensure the vehicle avoids collisions; as evidenced by addressing temporary clearance radii violations immediately. Nevertheless, reaching the goal position is also important. To preserve the guarantee that the vehicle reaches the goal position in finite time we make the following additional assumption:
\begin{itemize}
	\item The wind disturbance or gust that requires the vehicle to utilize the drift frame is finite in duration; therefore, the vehicle is not drifting for an infinite amount of time.
\end{itemize}

We do not consider battery life or other vehicle specific parameters explicitly in this article but group those with Assumption \ref{AssumpGoalPos} that the vehicle is capable of reaching the goal position upon starting the mission. 

\section{Vehicle and Controller}
\label{SecVehController}
{\color{black}The vehicle dynamics for a quadrotor are given in Eqs.~\ref{EqEOMLin} and \ref{EqEOMAng}. Equation \ref{EqEOMLin} is written in the inertial frame, and Eq.~\ref{EqEOMAng} is written in the body frame: 
\begin{align}
	\label{EqEOMLin}
	m\mathbf{\ddot{p}} &= \mathbf{f} + m\mathbf{g} + \mathbf{d}_{p} \\
	\label{EqEOMAng}
	\mathbf{J}\mathbf{\dot{\omega}} &= \mathbf{\omega} \times \mathbf{J} \mathbf{\omega} + \mathbf{u} + \mathbf{R}_{IB}\mathbf{d}_{\omega}
\end{align}
\noindent where $\mathbf{f}$ is the total thrust, $\mathbf{d}_p$ is the translational disturbance (including drag), $\mathbf{J}$ is the vehicle moment of inertia, $\mathbf{\dot{\omega}}$ is the rotational acceleration, $\mathbf{u}$ is the total torque, $\mathbf{R}_{IB}$ is the rotation matrix from the inertial to body frame, and $\mathbf{d}_{\omega}$ is the rotational disturbance. The control inputs are the vehicle force, $\mathbf{f}$, and torque, $\mathbf{u}$. }

The vehicle dynamics also use the single state dynamic inflow model \cite{Peters1988,Pitt1981} to take into account aerodynamic effects on the propellers like thrust reduction from propeller inflow velocity \cite{Leishman2006,Bramwell,Newman1994,Padfield1996} and blade flapping \cite{Hoffmann}. For the purposes of the control law, these terms are added to the disturbance term.

{\color{black}The vehicle controller uses an inner-, and outer-loop control similar to \cite{Bialy2013} and \cite{Cao2016}, where the outer loop controls the translational component and the inner loop controls the rotational component. The outer loop uses a nonlinear robust integral of the sign of the error (RISE) controller \cite{FischerNL2014}, summarized in Eqs.~\ref{EqFischerControl} to \ref{EqFischere2}. The inner loop utilizes the PID control in Eq.~\ref{EqPIDControl} \cite{Cao2016}:
\begin{align}
	\label{EqFischerControl}
	\mathbf{f} &= (k_s + 1) \mathbf{e}_2 - (k_s + 1) \mathbf{e}_2(0) + \nu \\
	\mathbf{\dot{\nu}} &= (k_s + 1) \alpha_2 \mathbf{e}_2 + \beta \mathrm{sgn}(\mathbf{e}_2) \\
		\label{EqFischere2}
	\mathbf{e}_2 &= (\mathbf{\dot{p}}_d - \mathbf{\dot{p}}) + \alpha_1 (\mathbf{p}_d - \mathbf{p}) \\
	\label{EqPIDControl}
	\mathbf{u} &= k_p \mathbf{q}_d + k_i \int \mathbf{q}_d dt + k_d \mathbf{\dot{q}_d}
\end{align}

\noindent where $k_s >0$ and $\alpha_2 > 1/2$ are control gains for the translational controller and $k_p ,k_i,k_d > 0$ are the PID controller gains for the desired Euler angles, $\mathbf{q}_d$, where $\mathbf{q}_d$ are determined from $\mathbf{f}$.}

The RISE controller provides semi-global asymptotic stability under the following conditions \cite{FischerNL2014}:
\begin{property} The disturbance term and its first two time derivatives are bounded. \end{property}
\begin{property} The desired trajectory $\mathbf{p}_d \in \mathbb{R}^n$ is designed such that $\mathbf{p}_d^{(i)} \in \mathbb{R}^n, \forall i=0,\ldots,4$ exist and are bounded.\end{property}

When the RISE controller is combined with the PID inner loop, the Lyapunov stability analysis shows that the controller still achieves asymptotic trajectory tracking in the presence of disturbances. 

While these controllers are well-suited for the environments considered in this article, the trajectory generation algorithm and drift frame implementation are independent of controller. Any suitable vehicle controller can be utilized with the trajectory generation algorithm.  

\section{Simulation Scenarios}
\label{SecSimScenarioFormation}
To demonstrate the drift mode capabilities, two scenarios are examined in simulation: (A) comparison of the difference in response to two vehicles facing an extreme gust where one is utilizing the drift frame definition and the other is not, and (B) four vehicles navigating through narrow passages in different wind conditions, showing the adaptive $v_c$ and $r_c$ calculations.

\subsection{Wind Model}
The wind model used to test the algorithm performance is a realistic spatio-temporal wind field that includes turbulence and gusting \cite{ColeWindTechReport2019}. This model reflects the unique but correlated conditions that different vehicles in the operational environment experience. It utilizes the Von K\'arm\'an power spectral density (PSD) function \cite{MILF8785C1980} combined with a spreading function to obtain the 2D wind field. The gusting is based on wind farm data relationships for duration, propagation, and shape, among other parameters \cite{Branlard2009}. 

\subsection{Simulation A} 
The performance of the drift controller is most clearly demonstrated by comparing two vehicles where one enables drift mode and the other does not. Both vehicles experience the wind condition shown in Fig.~\ref{FigUnstableStableVair} where there is an extreme gust that exceeds the vehicles' thrust limitations. Both vehicles drift with the gust, but the vehicle that enables the drift mode successfully recovers.

\begin{figure}
	\begin{center}
		\includegraphics[width=3in]{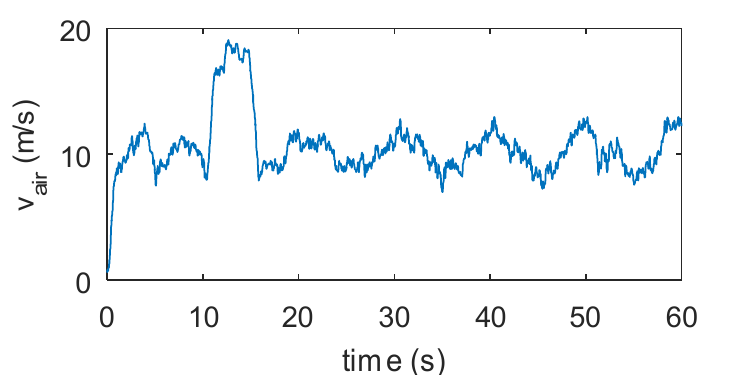}
		\caption{\label{FigUnstableStableVair} Wind experienced by both vehicles in Simulation A, where there is a gust of $v_{gust} = 19$ m/s that exceeds the vehicles' thrust limitations.} 
	\end{center}
\end{figure}
 
The vehicle parameters for the simulation are $m = 0.54$g, $f_{max}=15$N, $r_c(0) = 2$m, $r_{obs}=7$m, $r_s=12.5$m, $r_{min} = 0.09$m, $\Delta T_s = 1$s, $\Delta T_c = 0.1$s, $\mathbf{J} = \mathrm{diag}([0.0037,0.0037,0.007])$ kg/m$^2$, $C_d$ = 0.41, and $\mathbf{A}= [0.04,~0.04,~0.09]$m$^{2}$. The starting cruise velocity is computed using Theorem 2 of \cite{Cole2018} to be $v_c(0) = 1.25$m/s. The controller gains are $\alpha_1 = 0.1$, $\alpha_2 = 1$, $k_s = 0.05$, and $\beta = 0.25$, where all of these values satisfy the constraints outlined in \cite{FischerNL2014} except $\beta$, which produces non-smooth behavior for large values.

Figure \ref{FigUnstableExample}A-C shows the performance of the vehicle that does not enable drift mode. The vehicle cannot track the desired trajectory in the inertial frame and as a result the controller demands increased thrust. As the vehicle attempts to track the trajectory it sacrifices altitude and eventually crashes. Figure \ref{FigUnstableExample}D-F shows the performance of the vehicle that enables drift mode. The vehicle re-plans the trajectory in the drift frame where the trajectory respects the vehicle thrust and sensor limitations and the vehicle maintains stability to recover from the gust. 

\begin{figure*}
	\begin{center}
	\includegraphics[width=0.9\textwidth]{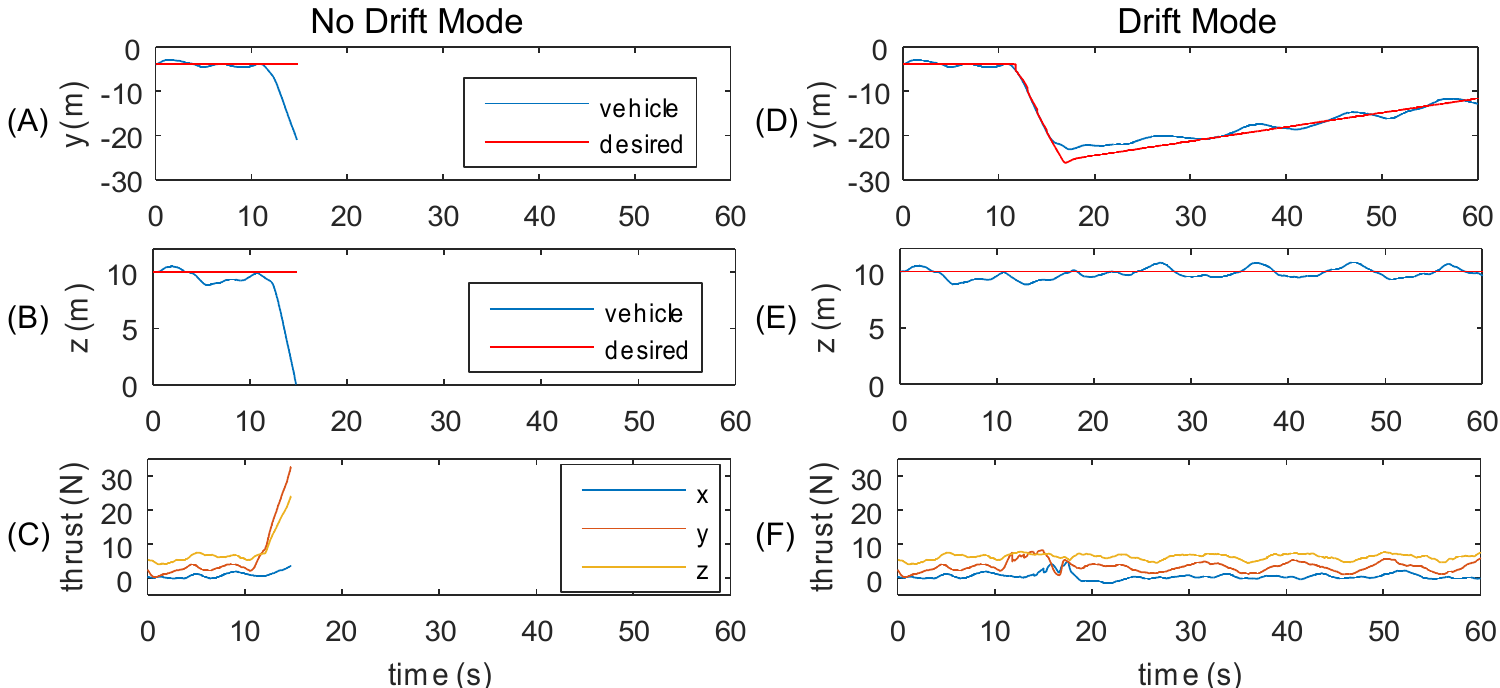}
	\caption[Example of two vehicles that experience a large gust, $v_{gust} = 19$ m/s]{\label{FigUnstableExample} Simulation A results where two vehicles experience a large gust, $v_{gust} = 18$ m/s, in the $\mathbf{y}_I$ direction. One vehicle becomes unstable as it does not enable drift mode, and the other vehicle maintains its stability by enabling drift mode. (A) The inertial frame trajectory tracking error in the $\mathbf{y}_I$ direction continues to grow as the disturbance drag force overcomes the vehicle thrust. (B) The vehicle loses altitude as it attempts to bring the tracking errors to zero, eventually resulting in a crash. (C) Since the vehicle does not have control authority in the inertial frame, the controller continues to demand a greater control force than what the vehicle can provide. (D) The drift frame trajectory error is minimized by relaxing the trajectory in the inertial frame. (E) The vehicle maintains altitude because the trajectory is feasible. (E) The desired force is within the vehicle's thrust capabilities in the drift frame.}
	\end{center}
\end{figure*}

\subsection{Simulation B}
This simulation demonstrates the dependence of the clearance radius, $r_c$, and cruise velocity $v_c$, on changing wind conditions. Four vehicles start inside a protected area where $v_{air} = 0$ m/s and move to an unprotected area where $v_{air} = 9$ m/s as shown in Fig.~\ref{FigSimAVair}. 

The vehicle parameters are the same as Simulation A except $r_c(0) = 0.8$m, $r_{obs}=5.4$m, $\alpha_1 = 0.3$, and $k_s = 0.5$. The starting cruise velocity is computed as $v_c(0) = 1.04$m/s.

Figure \ref{FigSimAXY} shows the vehicle trajectories through the environment. In the protected environment the vehicles move side by side through a narrow opening between buildings. Once the vehicles reach the unprotected environment, they recompute $v_c$ and $r_c$ based on the estimated wind. The new values for $v_c$ and $r_c$ are shown in Fig.~\ref{FigSimAVariPdotRc}A-B, respectively for vehicle 1 as a representative case. Due to the change in $r_c$, the vehicles cannot go through the next narrow opening two-by-two. Instead, the vehicles must alternate one at a time to fly through safely.

Notice that the trajectory generation algorithm respects the changing $r_c$ values as expected through the second opening in Fig.~\ref{FigSimAVariPdotRc}B. Also, it is interesting to note that since $v_c$ and $r_c$ are inversely proportional, the change in minimum reaction distance may not be significant as the two parameters may offset each other. 

\begin{figure}
	\begin{center}
		\includegraphics[width=3in]{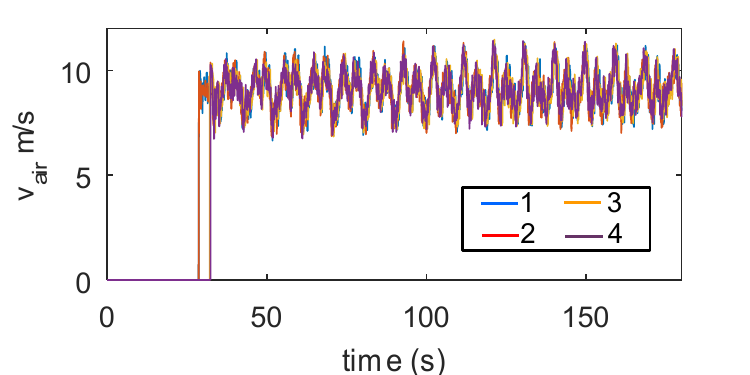}
		\caption{\label{FigSimAVair} Wind experienced by the vehicles in Simulation B. The vehicles are initially protected before being exposed to the wind.}
	\end{center}
\end{figure} 

\begin{figure}
	\begin{center}
		\includegraphics[width=3in]{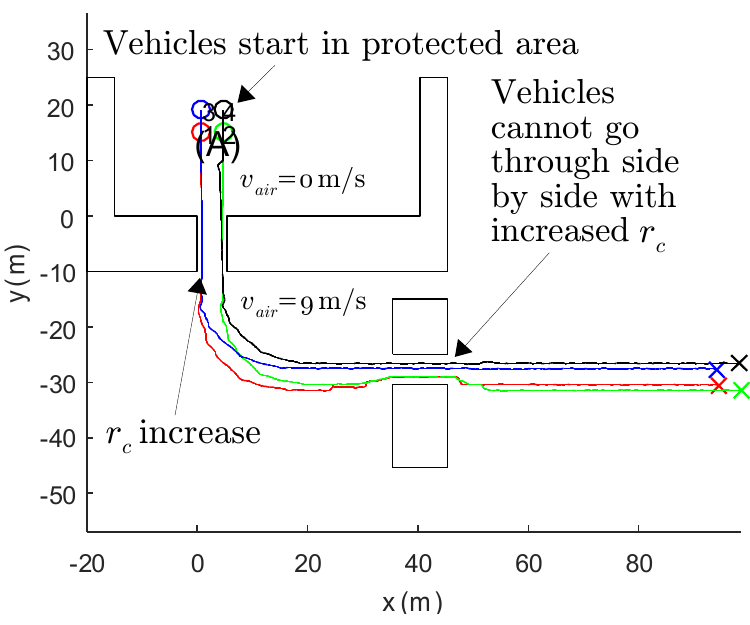}
		\caption{\label{FigSimAXY} Simulation B results showing the vehicle trajectories in the environment. The vehicles traverse through two narrow openings of the same size, once in protected conditions and once in the wind field. The trajectory generation algorithm respects the changes in $r_c$ to safely navigate the vehicles in the environment.}
	\end{center}
\end{figure}

\begin{figure}
	\begin{center}
		\includegraphics[width=3in]{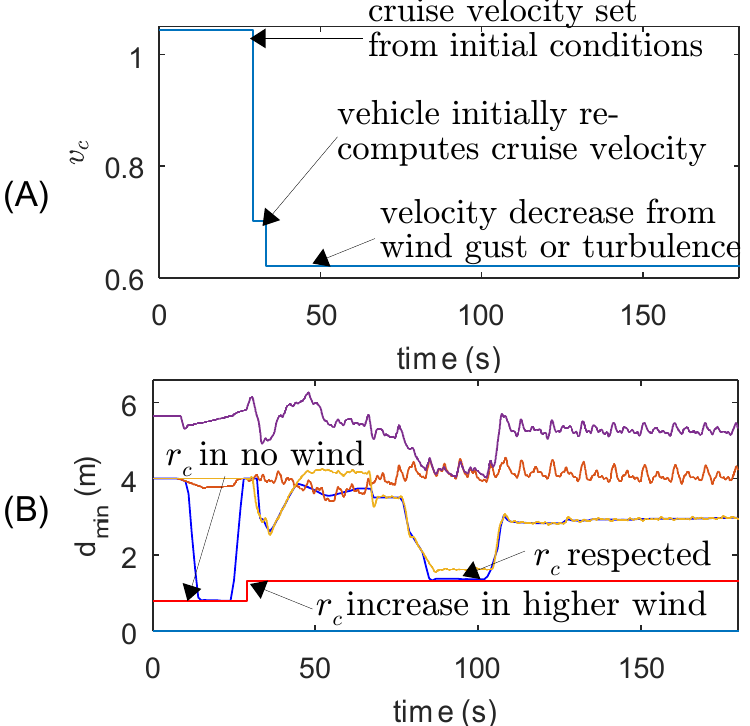}
		\caption{\label{FigSimAVariPdotRc} Variation in cruise velocity and clearance radius as a function of wind estimation for Simulation B. (A) Resulting changes in $v_c$ for the constraining (in this case weakest) vehicle. (B) Resulting change in $r_c$ values for the increased wind. Once the $r_c$ value is increased, the trajectory generation algorithm respects the new value. }
	\end{center}
\end{figure}

\section{Conclusion}
\label{SecConclusionFormation}
This article presents a trajectory generation algorithm that extends the vehicle's capable flight range beyond the closed-loop stability bounds. To do this, the algorithm establishes a drift frame in which the vehicle has some control authority, where the drift rate is set based on wind estimation measurements, vehicle thrust, and drag.

Compared to existing techniques, this trajectory planner allows the vehicles to safely navigate in environments where traditional control methods are overwhelmed. Even if the flight controllers of the traditional methods do not become unstable, the vehicle performance is unpredictable and the vehicle has no guarantees for navigation or collision avoidance. By defining the drift frame, the vehicle has some control authority that is quantitatively established, so the vehicle can determine under what conditions it can safely avoid collisions. Furthermore, if and when the disturbance decreases, the vehicle gracefully transitions back to the normal mode and re-plans a trajectory to converge to the desired inertial reference trajectory.

While the drift frame extends the environments in which the vehicle can operate, it does so by relaxing the trajectory in the inertial frame and re-planning it in the drift frame. Therefore, there is still some tracking error in an absolute sense, even though the local drift frame tracking errors are within the region of attraction. Improvements to the error in the inertial frame could be gained by optimizing the drift frame to maintain some motion towards the inertial frame trajectory to minimize errors once the disturbance decreases. Additionally, the framework developed for the drift frame is defined generically enough to extend to 3D motion. 


\begin{appendices}
\section{Proof of Theorem \ref{ThDriftVairMax}}
\label{AppendixTheorem1}
Theorem \ref{ThDriftVairMax} extends the environments in which the vehicle can operate by defining a drift frame that moves with the vehicle drift velocity and in which the vehicle has control authority. 

\begin{proof}
The vehicle navigates in the environment by computing smooth heading and velocity functions. The timespan, $\tau_{f,min}$, for these curves is defined by Theorem 1 of \cite{Cole2018}, where to navigate, $\tau_{f,min}>0 \implies a_{max} > 0$. The mean wind speed condition for $a_{max} > 0$ is defined in Theorem 1 of \cite{Cole2018} as $v_{air,max} < \sqrt{f_{planar}/K_d}$. When $v_{air,max} \geq \sqrt{f_{planar}/K_d}$, the vehicle has no control authority in the inertial frame. 

Since $v_{air,max}$ is the magnitude of the maximum resultant wind velocity experienced by the vehicle,
\begin{equation}
	v_{air,max} = ||-\mathbf{\dot{p}}_d + \mathbf{v}_{air}||
\end{equation} 
\noindent it can be reduced as $\mathbf{p}_d \rightarrow \mathbf{v}_{air}$. Define a relative frame, the drift frame, that moves according to $\mathbf{v}_{drift}$. If $\mathbf{v}_{drift}$ is chosen based on the constraints of Eq.~\ref{EqVdrift} and \ref{EqVdriftVec}, then it can be shown that in the drift frame the vehicle retains some control authority, as follows.

In the drift frame, the drift velocity, $\mathbf{v}_{drift}$, has a ``downwind" component that is aligned with $\mathbf{v}_{air}$ and a ``crosswind" component that is perpendicular to $\mathbf{v}_{air}$. There are two extreme cases, namely when $\mathbf{v}_{drift}$ is parallel to $\mathbf{v}_{air}$ and there is no crosswind component, and when the downwind component of $\mathbf{v}_{drift}$ matches $\mathbf{v}_{air}$ and there is a non-zero crosswind component. 

First consider the case where $\mathbf{v}_{drift}$ is parallel to $\mathbf{v}_{air}$ where
\begin{equation}
	\mathbf{v}_{air}^D = \mathbf{v}_{air} - \mathbf{v}_{drift} 
\end{equation}
\noindent simplifies to
\begin{equation}
	v_{air,max}^D = v_{air,max} - v_{drift}
\end{equation}

To retain control authority, $v_{air,max}^D \leq \sqrt{f_{planar}/K_d}$ from Theorem 1 of \cite{Cole2018}. To achieve this, $v_{drift}$ must be set according to
\begin{equation}
	v_{air,max}^D = v_{air,max} - v_{drift} \leq \sqrt{f_{planar}/K_d} \nonumber 
\end{equation}
\noindent which results in
\begin{equation}
	\label{EqVdriftLower}
	v_{drift} \geq v_{air,max} - \sqrt{f_{planar}/K_d}
\end{equation}
\noindent Equation \ref{EqVdriftLower} provides the lower bound of the constraint in Eq.~\ref{EqVdrift}.

In the other extreme, when the downwind component is matched to $v_{air,max}$ the resultant wind vector simplifies as
\begin{align}
	\mathbf{v}_{air}^D &= v_{air,max} \frac{\mathbf{v}_{air}}{||\mathbf{v}_{air}||} - \left(v_{air,max} \frac{\mathbf{v}_{air}}{||\mathbf{v}_{air}||} + v_{drift,\perp}\mathbf{R}_{90} \frac{\mathbf{v}_{air}}{||\mathbf{v}_{air}||}\right) \nonumber \\
	\implies \mathbf{v}_{air}^D &= v_{drift,\perp}\mathbf{R}_{90} \frac{\mathbf{v}_{air}}{||\mathbf{v}_{air}||} \nonumber \\
	\implies v_{air,max}^D & = v_{drift,\perp}
\end{align}
\noindent where $\mathbf{R}_{90}$ is the rotation matrix for the vector perpendicular to $\mathbf{v}_{air}$.

The magnitude of the crosswind component is re-written as
\begin{equation}
	v_{drift,\perp} = \sqrt{v_{drift}^2 - v_{air,max}^2}
\end{equation}

Again, to retain control authority, $v_{air,max}^D \leq \sqrt{f_{planar}/K_d}$, which gives the following:
\begin{align}
	v_{air,max}^D = v_{drift,\perp} \leq \sqrt{f_{planar}/K_d} \nonumber \\
	 \implies \sqrt{v_{drift}^2 - v_{air,max}^2} \leq \sqrt{f_{planar}/K_d} \nonumber \\
	 \implies v_{drift}^2 \leq v_{air,max}^2 + \frac{f_{planar}}{K_d} \nonumber \\
	 \label{EqVdriftUpper}
	\implies  v_{drift} \leq \sqrt{v_{air,max}^2 + \frac{f_{planar}}{K_d} }
\end{align}
\noindent where Eq.~\ref{EqVdriftUpper} provides the upper bound of the constraint in Eq.~\ref{EqVdrift}.

Equations \ref{EqVdriftLower} and \ref{EqVdriftUpper} constrain the magnitude, but the drift frame velocity vector must also be constrained. Considering the extreme cases again, to satisfy the downwind component the following must hold
\begin{equation}
	\label{EqVdriftVecUpper}
	\mathbf{v}_{drift} \cdot \frac{\mathbf{v}_{air}}{||\mathbf{v}_{air} ||} \geq v_{air,max} - \sqrt{f_{planar}/K_d}
\end{equation}

Similarly, to enforce the upper bound on the velocity magnitude and avoid unnecessarily high velocities, the drift frame downwind component must not exceed $v_{air,max}$, giving
\begin{equation}
	\label{EqVdriftVecLower}
	\mathbf{v}_{drift} \cdot \frac{\mathbf{v}_{air}}{||\mathbf{v}_{air} ||} \leq v_{air,max}
\end{equation}

Equations \ref{EqVdriftVecUpper} and \ref{EqVdriftVecLower} provide the bounds for Eq.~\ref{EqVdriftVec}. 

Lastly, the obstacle velocities must also be translated to the drift frame. According to Assumptions \ref{AssumpLessCapable} and \ref{AssumpMovingObsFormation}, at best a moving obstacle moves at the same velocity as the vehicles, but it never moves against the wind towards a vehicle. Therefore, the constraining case for the obstacles is a stationary obstacle, which in the drift frame moves with speed
\begin{equation}
	v_{o,max}^D = v_{drift}
\end{equation}
By using the drift frame parameters, $v_{air,max}^D$, and $v_{o,max}^D$, where it is shown that the vehicle has control authority, a drift frame cruise velocity $v_c^D$ is computed using the conditions of Theorem 2 of \cite{Cole2018}, which guarantees the vehicle does not violate $f_{max}$ and safely clears obstacles by $r_c$.   
\end{proof}
\end{appendices}

\bibliographystyle{unsrt}
\bibliography{C:/Users/cole/Documents/ColeThesisWork/Documentation/ReferencesBibFiles/Controls,C:/Users/cole/Documents/ColeThesisWork/Documentation/ReferencesBibFiles/EstimationFormation,C:/Users/cole/Documents/ColeThesisWork/Documentation/ReferencesBibFiles/Gusting,C:/Users/cole/Documents/ColeThesisWork/Documentation/ReferencesBibFiles/RCSBib,C:/Users/cole/Documents/ColeThesisWork/Documentation/ReferencesBibFiles/WindWaterRefs}

\begin{thebibliography}{10}

\bibitem{FAACertifiedPilots}
{FAA} - certified remote pilots including commercial operators.
\newblock {\em https://www.faa.gov/uas/commercial\_operators/}.

\bibitem{FAA107Waivers}
{FAA} - part 107 waivers issued.
\newblock {\em
  https://www.faa.gov/uas/commercial\_operators/part\_107\_waivers/\\waivers\_issued/}.

\bibitem{VanLoock2014}
W.~VanLoock, G.~Pipeleers, M.~Diehl, J.~De~Schutter, and J.~Swevers.
\newblock Optimal path following for differentially flat robotic systems
  through a geometric problem formulation.
\newblock {\em IEEE Transactions on Robotics}, 2014.

\bibitem{Shiller2013}
Z.~Shiller, S.~Sharma, I.~Stern, and A.~Stern.
\newblock Online obstacle avoidance at high speeds.
\newblock {\em The Int. J. of Robotics Research}, 2013.

\bibitem{Matveev2015}
A.~Matveev, M.~Hoy, and A.~Savkin.
\newblock A globally converging algorithm for reactive robot navigation among
  moving and deforming obstacles.
\newblock {\em Automatica}, 2015.

\bibitem{Ferrera2017}
E.~Ferrera, J.~Capit\'an, A.R. Casta\~no, and P.J. Marr\'on.
\newblock Decentralized safe conflict resolution for multiple robots in dense
  scenarios.
\newblock {\em Robotics and Autonomous Systems}, 91:179--193, 2017.

\bibitem{Hoy2012}
M.~Hoy, A.~Matveev, and A.~Savkin.
\newblock Collision free cooperative navigation of multiple wheeled robots in
  unknown cluttered environments.
\newblock {\em Robotics and Autonomous Systems}, 2012.

\bibitem{Choi2013}
H.~Choi, Y.~Kim, and I.~Hwang.
\newblock Reactive collision avoidance of unmanned aerial vehicles using a
  single vision sensor.
\newblock {\em Journal of Guidance, Control, and Dynamics}, 2013.

\bibitem{Chunyu2010}
J.~Chunyu, Z.~Qu, E.~Pollak, and M.~Falash.
\newblock Reactive target-tracking control with obstacle avoidance of unicycle
  type mobile robots in a dynamic environment.
\newblock {\em ACC, Baltimore, MD}, pages 1190--1195, 2010.

\bibitem{Cole2018}
K.~Cole and A.~Wickenheiser.
\newblock Reactive trajectory generation for multiple vehicles in unknown
  environments with wind disturbances.
\newblock {\em IEEE Transactions on Robotics}, pages 1--16, 2018.

\bibitem{Dicker2016}
G.~Dicker.
\newblock Quadrotor reorientation control for collision recovery.
\newblock Master's thesis, McGill University, 2016.

\bibitem{Faessler2015}
M.~Faessler, F.~Fontana, C.~Forster, and D.~Scaramuzza.
\newblock Automatic re-initialization and failure recovery for aggressive
  flight with monocular vision-based quadrotor.
\newblock {\em IEEE Conference on Robotics and Automation}, 2015.

\bibitem{Waslander2009}
S.~Waslander and C.~Wang.
\newblock Wind disturbance estimation and rejection for quadrotor position
  control.
\newblock In {\em AIAA Unmanned Unlimited, Seattle, WA}, 2009.

\bibitem{Sikkel2016}
L.N.C. Sikkel, G.C.H.E. de~Croon, Wagter~C. De., and Q.P. Chu.
\newblock A novel online model-based wind estimation approach for quadrotor
  micro air vehicles using low cost mems imus.
\newblock {\em IEEE/RSJ International Conference on Intelligent Robots and
  Systems}, 2016.

\bibitem{Palomaki2017}
R.~T. Palomaki, N.~T. Rose, M.~van~den Bossche, T.J. Sherman, and S.~F.~J.
  De~Wekker.
\newblock Wind estimation in the lower atmosphere using multirotor aircraft.
\newblock {\em Journal of Atmospheric and Oceanic Technology}, 2017.

\bibitem{Sydney2013}
N.~Sydney, B.~Smyth, and D.~Paley.
\newblock Dynamic control of autonomous quadrotor flight in an esimated wind
  field.
\newblock {\em IEEE Conference on Decision and Control}, pages 3609--3616,
  2013.

\bibitem{Bruschi2016}
P.~Bruschi, M.~Piotto, F.~Dell'Agnello, J.~Ware, and N.~Roy.
\newblock Wind speed and direction detection by means of solid-state
  anemometers embedded on small quadcopters.
\newblock {\em 30th Eurosensors Conference}, 2016.

\bibitem{Kumar2016}
P.~Saab~W. Kumar, A. Ben-Tzvi and M.R. Snyder.
\newblock Wireless telemetry system for real-time estimation of ship air wwake
  with {UAVs}.
\newblock {\em Mechatronics}, 2016.

\bibitem{Peters1988}
D.~Peters and N.~HaQuang.
\newblock Technical note: Dynamic inflow for practical applications.
\newblock {\em Journal of the American Helicopter Society}, 1988.

\bibitem{Pitt1981}
D.M. Pitt and D.A. Peters.
\newblock Theoretical prediction of dynamic-inflow derivatives.
\newblock {\em Vertica}, 5(1):21--34, 1981.

\bibitem{Leishman2006}
G.~Leishman.
\newblock {\em Principles of Helicopter Aerodynamics}.
\newblock Cambridge University Press, 2006.

\bibitem{Bramwell}
A.R.S. Bramwell, G.~Done, and D.~Balmford.
\newblock {\em Bramwell's Helicopter Dynamics}.
\newblock AIAA and Butterworth-Heinemann, 2001.

\bibitem{Newman1994}
S.~Newman.
\newblock {\em The foundations of helicopter flight}.
\newblock Halsted Press, 1994.

\bibitem{Padfield1996}
G.~Padfield.
\newblock {\em Helicopter Flight Dynamics: The Theory and Application of Flying
  Qualities and Simulation Modeling}.
\newblock AIAA, 1996.

\bibitem{Hoffmann}
G.~Hoffmann, H.~Huang, S.~Waslandar, and C.~Tomlin.
\newblock Quadrotor helicopter flight dynamics and control: Theory and
  experiment.
\newblock {\em AIAA Guidance, Navigation, and Control Conference, Hilton Head,
  SC}, 2007.

\bibitem{Bialy2013}
B.~Bialy, J.~Klotz, K.~Brink, and W.E Dixon.
\newblock Lyapanov-based robust adaptive control of a quadrotor {UAV} in the
  presence of modeling uncertainties.
\newblock In {\em ACC, Washington, DC}, pages 13--18, 2013.

\bibitem{Cao2016}
N.~Cao and A.F. Lynch.
\newblock Inner-outer loop control of a quadrotor {UAVs} with input and state
  constraints.
\newblock {\em IEEE Transactions on Control Systems Technology}, pages
  1797--1804, 2016.

\bibitem{FischerNL2014}
N.~Fischer, D.~Hughes, P.~Walters, E.~Schwartz, and W.~Dixon.
\newblock Nonlinear rise-based control of an autonomous underwater vehicle.
\newblock {\em IEEE Transactions on Robotics}, 2014.

\bibitem{ColeWindTechReport2019}
K.~Cole and A.~Wickenhesier.
\newblock Spatio-temporal wind modeling for {UAV} simulations.
\newblock {\em http://arxiv.org/abs/1905.09954}, 2019.

\bibitem{MILF8785C1980}
U.S.~Military Specification.
\newblock Mil-f-8785c.
\newblock 1980.

\bibitem{Branlard2009}
E.~Branlard.
\newblock Wind energy: On the statistics of gusts and their propagation through
  a wind farm.
\newblock {\em ECN-Wind-Memo-09-005}, 2009.

\end{thebibliography}

\vspace{-0.5in}
\begin{IEEEbiography}[{\includegraphics[width=1in]{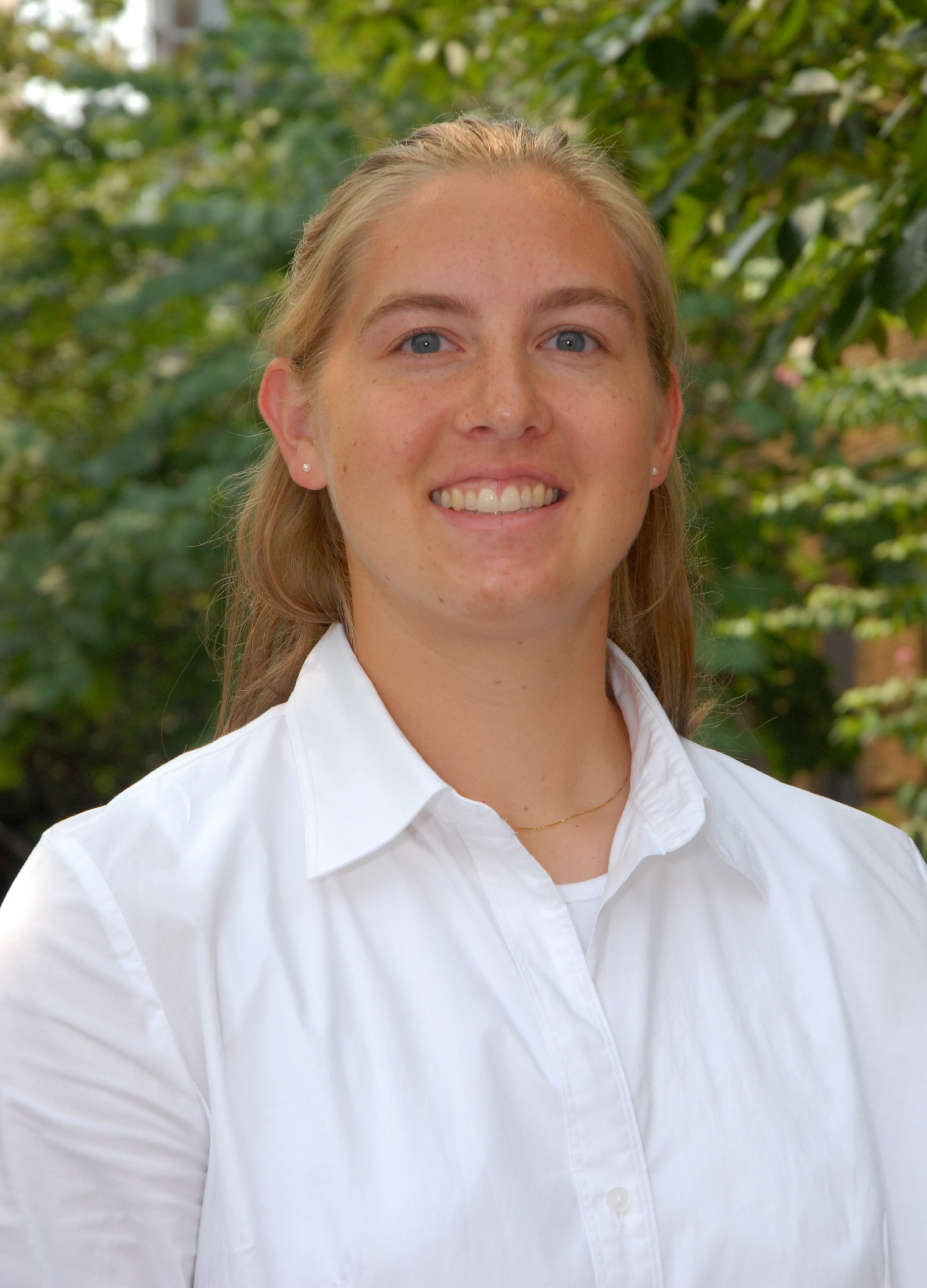}}]{Kenan Cole}
Kenan Cole received her B.S., M.S., and Ph.D. degrees in mechanical engineering from The George Washington University, Washington, DC, USA, in 2007, 2011, and 2018, respectively. Her research interests include vehicle controls, formation controls, and modeling environmental disturbances. 
\end{IEEEbiography}
\vspace{-5in}
\begin{IEEEbiography}[{\includegraphics[width=1in]{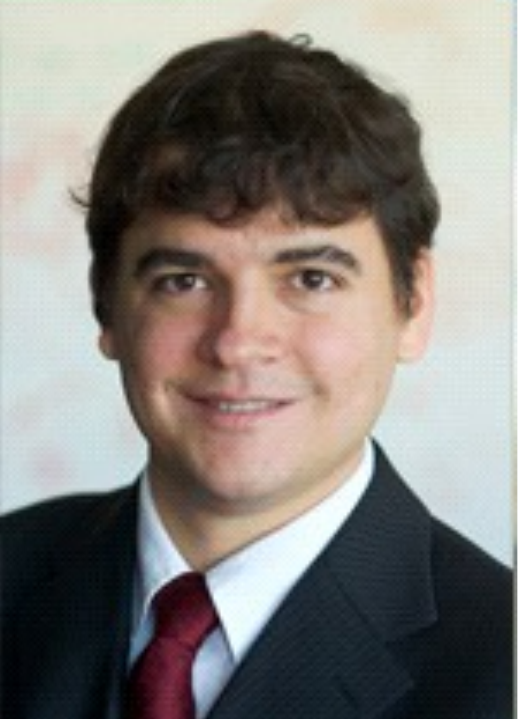}}]{Adam Wickenheiser}
Adam M. Wickenheiser received the B.S. degree in mechanical engineering (with a minor in applied mathematics) and M.S. and Ph.D. degrees in aerospace engineering from Cornell University, Ithaca, NY, USA, in 2002, 2006, and 2008, respectively. Since 2018, he has been a faculty member with the Department of Mechanical Engineering, University of Delaware, Newark, DE, USA, where he is currently an Associate Professor.

From 2010 to 2018, he was an Assistant Professor with the Department of Mechanical and Aerospace Engineering, The George Washington University. From 2008 to 2009, he was a Postdoctoral Associate with the Sibley School of Mechanical and Aerospace Engineering, Cornell University. His research interests include bioinspired flight, multifunctional materials and systems, and energy harvesting for autonomous systems.

Prof. Wickenheiser was the Chair of the Energy Harvesting Technical Committee of the American Society of Mechanical Engineers from 2014 to 2016, and is currently a member of the International Organizing Committee for the International Conference on Adaptive Structures and Technologies. He was the recipient of the 2011 Intelligence Community Young Investigator Award.
\end{IEEEbiography}

\end{document}